\documentclass[letterpaper, 10pt, final, conference, twocolumn]{ieeeconf}

\pdfminorversion=4
\IEEEoverridecommandlockouts

\usepackage{epsfig} %

\usepackage{amsmath,graphicx,amsfonts,amssymb,epsfig,subfig,mathrsfs,mathtools}

\usepackage{ntheorem}
\usepackage{cuted}
\usepackage{arydshln}
\usepackage{blkarray}

\usepackage{color}
\usepackage{multirow}
\usepackage{multicol}
\usepackage{lipsum}
\usepackage{rotating}
\usepackage{graphicx}
\usepackage{psfrag}
\usepackage{algorithm}

\usepackage[utf8]{inputenc}
\usepackage{xspace}

\usepackage{booktabs}
\usepackage{algorithmicx}
\usepackage{algpseudocode}
\usepackage{cite}
\usepackage{tikz}
\usepackage{cancel}
\usepackage{textcomp}
 \definecolor{lin}{RGB}{240,0,0}
 \definecolor{paleblue}{RGB}{0,9,255}

\usepackage{comment}
\includecomment{comment}
\specialcomment{comment}{\begingroup}{\endgroup}

\newcommand{\map}[3]{#1: #2 \rightarrow #3}
\newcommand{\setdef}[2]{\{#1 \; | \; #2\}}

\newcommand{\real}{\ensuremath{\mathbb{R}}}
\newcommand{\probP}{\ensuremath{\mathbb{P}}}
\newcommand{\probQ}{\ensuremath{\mathbb{Q}}}
\newcommand{\probW}{\ensuremath{\mathbb{W}}}

\newcommand{\unitcircle}{\ensuremath{\mathbb{S}^1}}
\newcommand{\realnonnegative}{\ensuremath{\mathbb{R}}_{\ge 0}}
\newcommand{\integernonnegative}{\ensuremath{\mathbb{Z}}_{\ge 0}}
\newcommand{\integerpositive}{\ensuremath{\mathbb{Z}}_{> 0}}
\newcommand{\until}[1]{\{1,\dots, #1\}}
\newcommand{\subscr}[2]{#1_{\textup{#2}}}
\newcommand{\supscr}[2]{#1^{\textup{#2}}}
\newcommand{\vect}[1]{\boldsymbol{#1}}
\newcommand{\vectorones}[1]{\vect{1}_{#1}}

\newcommand{\Norm}[1]{\|#1\|}
\newcommand{\trans}[1]{{#1}^\top}

\newcommand{\Prob}{\ensuremath{\operatorname{Prob}}}

\newcommand{\untilinterval}[2]{\{{#1},\dots, {#2}\}}

\newcommand{\pinverse}[1]{{#1}^{\dagger}}

\newcommand{\qed}{~\hfill{ $\blacksquare$}}

\graphicspath{{./Fig/}}
\usepackage{epstopdf}
\epstopdfsetup{suffix=,outdir=./Fig/ }

\makeatletter
\newtheoremstyle{breaknote}
  {\item{\theorem@headerfont
          ##1\ ##2\theorem@separator}\hskip\labelsep\relax}
  {\item{\theorem@headerfont
          ##1\ ##2\ (##3)\theorem@separator}\hskip\labelsep\relax}
\makeatother
\theoremstyle{breaknote}
\newtheorem{assumption}{Assumption}[section]
\newtheorem{theorem}{Theorem}[section]

\newtheorem{lemma}{Lemma}[section]

\newtheorem{remark}{Remark}[section]

\newtheorem{example}{Example}[section]

\title{Online Learning of Parameterized Uncertain Dynamical
  Environments \\ with Finite-sample Guarantees}

\author{Dan Li$^{1}$, Dariush Fooladivanda$^{2}$ and Sonia
  Mart{\'\i}nez$^{1}$ \thanks{*This research was developed with
    funding from ONR N00014-19-1-2471, and AFOSR FA9550-19-1-0235.}
  \thanks{$^{1}$ D. Li and S. Mart{\'\i}nez are with the Department of
    Mechanical and Aerospace Engineering, University of California San
    Diego, La Jolla, CA 92092, USA. {\tt lidan@ucsd.edu;
      soniamd@ucsd.edu} }
\thanks{$^{2}$ D. Fooladivanda is with the
    Department of Electrical Engineering and Computer Sciences,
    University of California at Berkeley, Berkeley, CA 94720,
    USA. {\tt  dfooladi@berkeley.edu;}}
}

\begin{document}

\maketitle
\begin{abstract}
  We present a novel online learning algorithm for a class of unknown
  and uncertain dynamical environments that are fully
  observable. First, we obtain a novel probabilistic characterization
  of systems whose mean behavior is known but which are subject to
  additive, unknown subGaussian disturbances. This characterization
  relies on recent concentration of measure results and is given in
  terms of ambiguity sets.
  Second, we extend the results to environments whose mean behavior is
  also unknown but described by a parameterized class of possible mean
  behaviors. Our algorithm adapts the ambiguity set
  dynamically by learning the parametric dependence online, and
  retaining similar probabilistic guarantees with respect to the
  additive, unknown disturbance. We illustrate the results on a
  differential-drive robot subject to environmental uncertainty.
\end{abstract}

\section{Introduction}
The online learning of uncertain dynamical systems has broad
application in various domains, including those of artificial
intelligence and
robotics~\cite{AC-GP:19,AHQ-YM-AS-MCY:20,AS-RM-JM-AJI:08}.
Fundamentally,
one is to exploit input-output data to identify the representation of
the environment that best captures its
behavior.
In this way, several techniques, from first-principles system
identification to, more recently, (deep) neural networks, have been
successfully used in various domains. Unfortunately, safe performance
usually depends upon the assimilation of vast amounts of data, which
is mostly done offline and prevents its  application in real-time
scenarios.
Motivated by this, we investigate the integration of
recently-developed probabilistically-guaranteed system descriptions
with online, predictor-based learning algorithms.

The system identification literature broadly encompasses
linear~\cite{LL:99,MV-VV:07} and non-linear
systems~\cite{MM-CN:11,CN-AN-GCC:19}, with asymptotic performance
guarantees. More recently, finite-sample analysis of identification
methods have been proposed for linear systems~\cite{TS-AR:19,SO-NO:19,
  AT-GJP:19,SF-NM-SS:19}. These methods leverage modern
measure-of-concentration results~\cite{NF-AG:15,RV:18} for
non-asymptotic guarantees of the identification error
bounds. Measure-of-concentration results are also used
in~\cite{DB-JC-SM:19-ecc,DB-JC-SM:19-tac}. However, the goal
of~\cite{DB-JC-SM:19-ecc,DB-JC-SM:19-tac} is to learn an unknown
initial distribution evolving under a known dynamical system
while assimilating data via a linear observer.  This characterization
is given in terms of \textit{ambiguity sets}, which are constructed
via multiple system trajectories or realizations.
In contrast, here we employ Wasserstein metrics to develop an online
learning algorithm for uncertain dynamical systems with
similar-in-spirit probabilistic guarantees.

\noindent \textit{Statement of Contributions:} We propose an online
learning algorithm that characterizes a class of unknown and uncertain
dynamical environments with probabilistic guarantees using a finite
amount of data. To achieve this, we first assume that the mean
behavior of the stochastic system is known but the system states are
subject to an additive, unknown subGaussian distribution,
characterized by a set of distributions or \textit{ambiguity set}.
Then, we extend the results to environments whose mean behavior is
unknown but belongs to a parameterized class of behaviors. In this
regard, we propose a time-varying parameterized ambiguity set and a
learning methodology to capture the behavior of the environment. We
show how the proposed online learning algorithm retains desirable
probabilistic guarantees with high confidence. A differential-drive
robot subject to environmental uncertainty is provided for an
illustration. Basic notations and definitions can be found in the
footnote.
\footnote{Let $\real^m$,
  $\realnonnegative^m$, $\integernonnegative^m$ and $\real^{m \times
    n}$ denote respectively the $m$-dimensional real space, the
  $m$-dimensional nonnegative real space, the $m$-dimensional
  nonnegative integer space, and the space of $m \times n$
  matrices. By $\vect{x} \in \real^m$ we denote a column vector of
  dimension $m$, while $\trans{\vect{x}}$ represents its
  transpose. The shorthand notation $\vectorones{m}$ denotes the
  column vector $\trans{(1,\cdots,1)} \in \real^m$. We use subscripts
  to index vectors, i.e., $\vect{x}_k \in \real^m$ for $k \in
  \integernonnegative$, and we use $x_i$ to denote the
  $\supscr{i}{th}$ component of $\vect{x}$. We denote respectively the $2$-norm and $\infty$-norm by $\Norm{\vect{x}}$ and $\Norm{\vect{x}}_{\infty}$.
  We define the $m$-dimensional norm ball with center $\vect{x} \in
  \real^m$ and radius $\epsilon \in \realnonnegative$ as the set
  $B_{\epsilon}(\vect{x}):=\setdef{\vect{y} \in
    \real^m}{\Norm{\vect{y}-\vect{x}} \leq \epsilon}$.  We denote by
  $\langle \cdot, \cdot \rangle$ an inner product in the space of
  interest. Consider the space $\real^m$, we define $\langle \vect{x},
  \vect{y} \rangle:= \trans{\vect{x}} \vect{y}$, $\vect{x}, \vect{y}
  \in \real^m$. In particular, $\Norm{\vect{x}}:=\sqrt{\langle
    \vect{x}, \vect{x} \rangle}$. Consider Finsler manifold $\real^2
  \times [ -\pi, \pi) \cong \real \times \unitcircle$ where
  $\unitcircle$ stands for the unit circle. For $(\vect{x},\theta_1),
  (\vect{y},\theta_2) \in \real^2 \times [ -\pi, \pi)$, we define
  $\langle (\vect{x},\theta_1), (\vect{y},\theta_2) \rangle:=
  \trans{\vect{x}} \vect{y} + \cos( \min \{ |{\theta_1-\theta_2}|, \;
  2 \pi -|{\theta_1-\theta_2}| \})$. In particular, we use $
  \Norm{(\vect{x},\theta)}:= \sqrt{\langle \vect{x}, \vect{x} \rangle
    + 1}$.
  Given an $A \in \real^{m \times m}$, we write its Singular
  Value Decomposition (SVD) as $A=U \Sigma \trans{V}$, where $U$, $V
  \in \real^m$ are orthonormal and $\Sigma$ is diagonal with non-negative entries.
  These entries are called singular values of $A$, and we denoted by $\subscr{\sigma}{max}(A)$ and
  $\subscr{\sigma}{min}(A)$ the maximal and non-zero minimal singular
  value of $A$, respectively. We denote by $\pinverse{A}:=V
  \pinverse{\Sigma} \trans{U}$ the Moore–Penrose inverse of $A$, where
  $\pinverse{\Sigma}$ is the same as $\Sigma$ except the replacement of each positive entry by its inverse.
  Let
  $(\Omega,\mathcal{F},\probP)$ be a probability space, with $\Omega$
  the sample space, $\mathcal{F}$ a $\sigma$-algebra, and $\probP$ the
  associated probability distribution. Let
  $\map{\vect{x}}{\Omega}{\real^m}$ be an induced random vector.  We
  denote by $\mathcal{M}$ the space of all probability distributions
  with finite first moment. %
  To measure the distance in $\mathcal{M}$, we use the
  dual version of the $1$-Wasserstein metric %
  $\map{d_W}{\mathcal{M} \times \mathcal{M}}{\realnonnegative}$,
  defined
  as in~\cite{KLV-RGS:58}. %
 A closed Wasserstein ball of
 radius $\epsilon$ centered at a distribution $\probP \in
 \mathcal{M}$ is denoted by
 $\mathbb{B}_{\epsilon}(\probP):=\setdef{\probQ \in
   \mathcal{M}}{d_W(\probP,\probQ) \le
   \epsilon}$. We denote the Dirac measure at $x_0 \in \Omega$
   as $\map{\delta_{\{x_0\}}}{\Omega}{\{0,1 \}}$. For any set $A \in
   \mathcal{F}$, we let $\delta_{\{x_0\}}(A)=1$, if $x_0 \in A$,
   otherwise $0$. For an $\vect{x}\in \Omega$, we denote $\probP \equiv \probQ + \vect{x}$, if $\probP$ is a translation of $\probQ$ by $\vect{x}$.

}

\section{Problem Statement}\label{sec:ProbStat}

This section presents the description of the uncertain dynamical
environment which we aim to learn, with a problem
definition.
Let $t\in
\integernonnegative$ denote time discretization. For each $t$, the
uncertain system is characterized by a random variable $\vect{x} \in
\real^n$ which evolves according to an \textit{unknown},
discrete-time, stochastic and, potentially, time-varying system
\begin{equation}
 \begin{aligned}
   \vect{x}_{t+1}=&f(t,\vect{x}_{t},\vect{d}_{t}) + \vect{w}_{t}, \;
   {\textrm{ with some }}
   \vect{x}_0 \sim \probP_0. %
 \end{aligned} \label{eq:env}
\end{equation}
The distribution $\probP_{t+1}$ characterizing
$\vect{x}_{t+1}$ is determined by the current state's distribution,
the unknown mapping %
$\map{f}{\realnonnegative \times \real^n \times \real^m}{\real^n}$,
and random vectors $\vect{w}_{t}$ that cannot be captured by $f$.  We
further assume that $\vect{d}_t$ is an exogenous signal that is
selected in advance or revealed online, which can play the role of an
external reference or control. %
Let us denote by $\probW_t$ the distribution of the random vector
$\vect{w}_{t} \in \real^n$. %

\begin{assumption}[Independent and stationary subGaussian distributions] \label{assump:subG}
{\rm Consider random vectors $\vect{w}_{t} \in \real^n$, $t \in \integernonnegative$. It is assumed that:
  \noindent \textbf{(1)} The random vectors $\vect{w}_{t}$ are
  component-wise and time-wise independent, i.e., $w_{t,i}$ and
  $w_{k,j}$ are independent, for all $t \ne k$, $i \ne j$, $(t,k) \in
  \integernonnegative^2$ and $(i,j) \in \untilinterval{1}{n}$.
  \textbf{(2)} For each $t$, $\vect{w}_{t}$ is a %
   zero-mean
  $\sigma$-subGaussian, i.e., for any $a \in \real^n$ we have
  $\mathbb{E} \left[ \exp ( \trans{a} {\vect{w}_{t}}) \right] \leq
  \exp ( {\Norm{a}^2 \sigma^2}/{2})$.  }
\begin{example}[$\sigma$-subGaussian distributions]
  {\rm A trivial example is any $\probW \equiv
    \mathcal{N}(\vect{0},\Sigma)$ with $\subscr{\sigma}{max}(\Sigma)
    \leq \sigma^2$. As any random vector supported on a compact set
    belongs to the subGaussian class, in particular, the
    following are $\sigma$-subGaussian distributions: \textbf{(1)}
    any zero-mean uniform distribution $\vect{w} \sim
    \mathcal{U}(\Omega)$ supported over $\Omega \subset
    B_{\sigma}(\vect{0})$; \textbf{(2)} any zero-mean discrete
    distribution with support $\Omega \subset B_{\sigma}(\vect{0})$. }
\end{example}

\end{assumption}
This paper aims to obtain a tractable characterization of the unknown
distribution $\probP_{t+1}$ of the immediate-future environment state
$\vect{x}_{t+1}$ online, $\forall \, t$. This is to be done %
by employing historical measurements, $\hat{\vect{x}}_{k}$, ${k \leq
  t}$, and data $\hat{\vect{d}}_{k}$, ${k \leq t}$. %

%
\begin{comment}
\begin{remark}[On finite-horizon learning
  of~\eqref{eq:env}] \label{remark:any_gamma} {\rm Our learning
    problem can be extended over finite horizons as follows. Let $N$
    be a learning horizon, then for each $t$ the goal is to
    characterize the dynamical environment over the next $N$ time
    slots, $\untilinterval{t+1}{t+N}$, with the previous knowledge of
    $\vect{d}:=(\vect{d}^{(1)},\ldots, \vect{d}^{(N)})$.  In other
    words, the objective is to characterize the
  joint distribution $\probQ:=\probP_{t+1} \times \cdots
    \times \probP_{t+N}$ of the stochastic process
    $\vect{x}:=(\vect{x}^{(1)}, \ldots, \vect{x}^{(N)})$.
    %
      %
      %
    %
    %
    %
    %
    %
    %
  }
\end{remark}
\end{comment}

%
%
%
%
%
%
%
%
%
%
%
%
%
%
%
%
%
%
%
%
%
%
%
%
%
%
%
%
%
%
%
%
%
%
%
%
%
%
%
%
%
%
%
%
%
%
%
%
%
%
%
%
%
%
%
%
%
%
%
%
%
%
%
%
%
%
%
%
%
%
%
%
%
%
%
%
%
%
%
%
%
%
%
%
%
%
%
%
%
%
%
%
%
%
%
%
%
%
%
%
%
%
%
%
%
%
%
%
%
%
%
%
%

\section{Characterization of Random Dynamical
  Environments under Perfect Information}
We aim to provide a description the random dynamical
system~\eqref{eq:env} via ambiguity sets.  %
More precisely, given knowledge $\vect{d}$, and system data
$\hat{\vect{x}}$, we look for a set of distributions
$\mathcal{P}_{t+1}:=\mathcal{P}_{t+1}(\vect{d},\hat{\vect{x}})$
characterizing $\probP_{t+1}$ via %
\begin{equation}
  {\Prob}\left( \probP_{t+1} \in \mathcal{P}_{t+1} \right)\geq 1-
  \beta,
  \label{eq:guarantee}
\end{equation}
for some $\beta \in(0,1)$. Observe that the probability is taken wrt the
historical random data
outcomes. %
To do this, let $T_0\in \integerpositive$ and $T:=\min \{t, T_0 \} \ge
1$, and consider the
historical data, $\hat{\vect{x}}_{k}$ and $\hat{\vect{d}}_{k}$, for $k
\in \mathcal{T}%
:=\{t-T,\ldots,t-1 \}$.
Assuming a perfect knowledge of $f$, we show first how to use the data
set $\mathcal{I}:=\{ \hat{\vect{x}}_{t}, \; \hat{\vect{x}}_{k}, \;
\hat{\vect{d}}_{k}, \; k \in \mathcal{T}%
 \}$  to construct
$\mathcal{P}_{t+1}$, $\forall \, t \ge 0$.
Let us denote by $\probQ_{t+1}\equiv\probQ_{t+1}(\vect{d})$ the
empirical distribution of $\vect{x}_{t+1}$ and define it as follows
\begin{equation*}
  \probQ_{t+1} %
  :=\frac{1}{T} \sum\limits_{k\in \mathcal{T}} \delta_{\{ {\xi}_{k}(\vect{d}) \}},
\end{equation*}
where $ {\xi}_{k}(\vect{d}):= f({t},\hat{\vect{x}}_{t},
  \vect{d}) + \hat{\vect{x}}_{k+1}
  -f({k},\hat{\vect{x}}_{k},\hat{\vect{d}}_{k}),~\forall
  k\in\mathcal{T}.$
  The following result enables us to construct the ambiguity set
  $\mathcal{P}_{t+1}$ that satisfies (\ref{eq:guarantee}).
\begin{lemma}[Asymptotic dynamic ambiguity set]
  Let us assume that %
  the system $f$ is known at each time $t$.
  Given a confidence level $\beta \in (0,1)$, parameter $T_0 \in
  \integerpositive$, and horizon $T=\min \{t, T_0 \}$, let us assume $\vect{w}_{k}$ is i.d. for $k \in \mathcal{T}$. Then,
  there exists a
  positive scalar $\epsilon:=\epsilon(T,\beta)$ such
  that~\eqref{eq:guarantee} holds by selecting
\begin{equation*}
  \mathcal{P}_{t+1}%
  := \mathbb{B}_{\epsilon}(\probQ_{t+1}) %
  =\setdef{\probP}{ d_W(\probP, \probQ_{t+1}) \leq \epsilon},
\end{equation*}
a Wasserstein ball centered at $\probQ_{t+1}$ with radius
\begin{equation*}
  \epsilon:= \sqrt{ \frac{2n \sigma^2}{T} \ln(\frac{1}{\beta})} + \mathcal{O}( T^{- {1}/{\max \{ n, 2  \}  }} ),
\end{equation*}
where $n$ is the dimension of $\vect{x}$ and $\sigma$ is as in
Assumption~\ref{assump:subG}. Further, if $T_0=
\infty$, %
then as $t \rightarrow \infty$, $\epsilon \rightarrow 0$, i.e., the
set $\mathcal{P}_{t+1}$ shrinks to the singleton $\probP_{t+1}$ at a
rate $\mathcal{O}({1}/{ T^{- {1}/{\max \{ n, 2 \} } }})$.
\label{lemma:perfectenv}
\end{lemma}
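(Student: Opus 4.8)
The plan is to collapse the data-generating offset $f(t,\hat{\vect{x}}_t,\vect{d})$ by translation invariance, reducing the claim to a concentration-of-empirical-measure estimate for i.i.d.\ subGaussian samples, and then to split the Wasserstein radius into a bias term and a fluctuation term. First I would use the dynamics~\eqref{eq:env}: for every $k\in\mathcal{T}$ we have $\hat{\vect{x}}_{k+1}-f(k,\hat{\vect{x}}_k,\hat{\vect{d}}_k)=\vect{w}_k$, so $\xi_k(\vect{d})=f(t,\hat{\vect{x}}_t,\vect{d})+\vect{w}_k$ and hence $\probQ_{t+1}\equiv \widehat{\probW}_T + f(t,\hat{\vect{x}}_t,\vect{d})$, where $\widehat{\probW}_T:=\frac1T\sum_{k\in\mathcal{T}}\delta_{\{\vect{w}_k\}}$ is the empirical distribution of the sample $\{\vect{w}_k\}_{k\in\mathcal{T}}$ (identically distributed by hypothesis, independent by Assumption~\ref{assump:subG}(1)), drawn from the common law $\probW$. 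Likewise, under perfect information $\probP_{t+1}\equiv \probW + f(t,\hat{\vect{x}}_t,\vect{d})$. Since $d_W$ is invariant under a common translation of its arguments, $d_W(\probP_{t+1},\probQ_{t+1})=d_W(\probW,\widehat{\probW}_T)$ for every outcome, so \eqref{eq:guarantee} is equivalent to exhibiting $\epsilon=\epsilon(T,\beta)$ with $\Prob\!\big(d_W(\probW,\widehat{\probW}_T)\le\epsilon\big)\ge 1-\beta$ --- a statement purely about the empirical measure of $T$ i.i.d.\ $\sigma$-subGaussian vectors in $\real^n$.

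Next I would verify the light-tail hypothesis required by the empirical-measure estimates: because each component of $\probW$ is $\sigma$-subGaussian, $\Norm{\vect{x}}^2=\sum_i x_i^2$ is sub-exponential under $\probW$, so $\mathbb{E}_{\probW}[\exp(\gamma\Norm{\vect{x}}^2)]<\infty$ for all sufficiently small $\gamma>0$. This places us in the regime covered by the Fournier--Guillin rates for convergence of empirical measures in $1$-Wasserstein distance, giving $\mathbb{E}[d_W(\probW,\widehat{\probW}_T)]\le c\,T^{-1/\max\{n,2\}}$ for some $c=c(n,\sigma)$ (with a benign $\log T$ factor when $n=2$, absorbed into the $\mathcal{O}(\cdot)$ term). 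This is precisely the $\mathcal{O}(T^{-1/\max\{n,2\}})$ summand of the claimed radius.

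For the fluctuation term I would regard $g(\vect{y}_1,\dots,\vect{y}_T):=d_W\!\big(\probW,\frac1T\sum_{k}\delta_{\{\vect{y}_k\}}\big)$ as a functional of $T$ independent subGaussian vectors. Replacing one argument $\vect{y}_k$ by $\tilde{\vect{y}}_k$ changes the empirical measure by at most $\frac1T\Norm{\vect{y}_k-\tilde{\vect{y}}_k}$ in $d_W$, so $g$ has subGaussian bounded differences; a bounded-difference inequality for subGaussian increments then yields $\Prob\!\big(d_W(\probW,\widehat{\probW}_T)\ge\mathbb{E}[d_W(\probW,\widehat{\probW}_T)]+s\big)\le\exp\!\big(-Ts^2/(2n\sigma^2)\big)$ for $s\ge0$, and equating the right-hand side to $\beta$ gives $s=\sqrt{(2n\sigma^2/T)\ln(1/\beta)}$. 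Adding the two contributions, the choice $\epsilon=\sqrt{(2n\sigma^2/T)\ln(1/\beta)}+c\,T^{-1/\max\{n,2\}}$ forces $\Prob(d_W(\probW,\widehat{\probW}_T)>\epsilon)\le\beta$, i.e.\ $\Prob(\probP_{t+1}\in\mathbb{B}_{\epsilon}(\probQ_{t+1}))\ge1-\beta$, which is \eqref{eq:guarantee} with $\mathcal{P}_{t+1}=\mathbb{B}_{\epsilon}(\probQ_{t+1})$. For the asymptotic claim, if $T_0=\infty$ then $T=t\to\infty$; both summands of $\epsilon$ vanish, with the slower one of order $T^{-1/\max\{n,2\}}$ when $n\ge2$, and combining $\epsilon\to0$ with $d_W(\probW,\widehat{\probW}_T)\to0$ (almost surely, via the above tail bound and Borel--Cantelli) shows that $\mathbb{B}_{\epsilon}(\probQ_{t+1})$ collapses to the singleton $\{\probP_{t+1}\}$ at the stated rate.

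The main obstacle is the fluctuation step: since the per-sample increments of $g$ are unbounded, ordinary McDiarmid does not apply, and one must instead invoke a concentration inequality for Lipschitz-type functionals of independent subGaussian vectors (equivalently, a dimension-free transportation/entropy argument); moreover, to land precisely on the constant $2n\sigma^2$ in the exponent rather than merely an absolute multiple of it, the subGaussian parameter must be tracked carefully through the component-wise structure of $\vect{w}_t$. A cruder route --- inverting a single Fournier--Guillin tail bound $\Prob(d_W>\epsilon)\le C_1e^{-C_2T\epsilon^{\max\{n,2\}}}$ --- would not reproduce the clean leading term, which is why the bias/fluctuation split is the natural path.
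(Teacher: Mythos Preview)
Your proposal is correct and follows essentially the same route as the paper: reduce to $d_W(\probW,\widehat{\probW}_T)$ by translation invariance, bound the mean via Fournier--Guillin rates, and control the fluctuation by showing $d_W(\widehat{\probW}_T,\probW)$ is $\sqrt{n}\sigma/\sqrt{T}$-subGaussian. The only presentational difference is that the paper obtains the fluctuation constant directly from the $T_1$ transportation--entropy inequality $d_W(\cdot,\probW)\le\sqrt{2n\sigma^2\,\mathcal{D}(\cdot\,|\,\probW)}$ and its Bobkov--G\"otze equivalence with subGaussian concentration, whereas you first frame it as a bounded-differences problem and then (correctly) note that the needed tool is exactly this transportation/entropy argument.
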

In practice, $T_0$, and $\beta$ need to be selected empirically,
in order to efficiently address the particular problem that leverages
the characterization of~\eqref{eq:env}.
\begin{comment}
We provide all the proofs of the lemmas and theorems in Appendix.
\end{comment}
%
%

\section{Characterization of Random Dynamical Environments in a
  Parameterized Family}
The construction of the empirical distribution $\probQ_{t+1}$ of the
previous section relies on the knowledge of $f$.
When $f$ is unknown, one may represent $f$ as belonging to a
parameterized class of functions. Such as the approach adopted in the
neural networks field and Koopman operator theory.
Here, we focus on the case that $f$ is approximated by a linear
combination of a class of functions or ``predictors'' as follows. %
\begin{assumption}[Environment predictor
  class] \label{assump:predictor} {\rm There exists a set of predictors
    $\map{f^{(i)}}{\realnonnegative \times \real^n \times
   \real^m}{\real^n}$, $(t,\vect{x},\vect{d}) \mapsto
 f^{(i)}(t,\vect{x},\vect{d})$, $i \in \until{p}$,  such that: (1) The vector fields
    $f^{(1)},f^{(2)}, \ldots,f^{(p)}$ are linearly independent almost everywhere.
    (2)
    There exists potentially time-varying coefficients
    $\vect{\alpha}^{\star}:=(\alpha^{\star}_1,\ldots,\alpha^{\star}_p)
    \in\real^p$ such that
\begin{equation*}
  f(t,\vect{x},\vect{d}) = \sum\limits_{i=1}^{p} \alpha_{i}^\star f^{(i)}(t,\vect{x},\vect{d}).
\end{equation*}
}%
\end{assumption}
As the selection of the predictors is not the subject of this study,
we assume that the predictors are found in advance, and hence they are
known to the learning algorithm.

The construction of an effective ambiguity set now depends on learning
the dynamical environment mapping. Let us denote by
$\vect{\alpha}\equiv\vect{\alpha}_t$ the estimated value of the
parameter $\vect{\alpha}^\star$ at time $t$. To construct
$\mathcal{P}_{t+1}$, %
consider $T$ predictions of $\vect{x}_{t+1}$ using
$f^{(i)}$, denoted by ${\xi}_{k}^{(i)}(\vect{\alpha},\vect{d})$. For
each $k \in \mathcal{T}$, $i\in\{1,\cdots,p\}$, and given
$\vect{d}:=\vect{d}_t$, we define
\begin{equation*} {\xi}_{k}^{(i)}(\vect{\alpha},\vect{d}):=
  f^{(i)}({t},\hat{\vect{x}}_{t}, \vect{d}) +
  \frac{\hat{\vect{x}}_{k+1}}{\trans{\vect{\alpha}} \vectorones{p}}
  -f^{(i)}({k},\hat{\vect{x}}_{k},\hat{\vect{d}}_{k}). %
\end{equation*}
Now, we select the empirical
$\hat{\probP}_{t+1}\equiv\hat{\probP}_{t+1}(\vect{\alpha},\vect{d})$,
as follows:
\begin{equation} \label{eq:empdistn}
  \hat{\probP}_{t+1} %
  := \frac{1}{T} \sum\limits_{k\in \mathcal{T}} \delta_{\{
    \sum\limits_{i=1}^{p} \alpha_i
    \xi_k^{(i)}(\vect{\alpha},\vect{d})\}}.
\end{equation}
The following result enables the construction of the ambiguity set
$\mathcal{P}_{t+1}$, relying on both $\vect{d}$ and $\vect{\alpha}$,
which satisfies (\ref{eq:guarantee}).
\begin{theorem}[Adaptive dynamic ambiguity set] \label{thm:ambiguityP}
  Assume
  that %
  the data set $\mathcal{I}$ is accessible, $\forall \, t$.  Further,
  let Assumption~\ref{assump:predictor}, on the environment predictor
  class, hold for some $\vect{\alpha}^{\star}$ %
  at time $t \in \mathcal{T}$. Then, given a confidence level $\beta \in (0,1)$, horizon
  parameter $T_0$, %
  and a learning parameter $\vect{\alpha}\equiv\vect{\alpha}_t \in
  \real^p$,
    there
    exists a scalar
    $\hat{\epsilon}:=\hat{\epsilon}(t,T,\beta,\vect{\alpha},\vect{d})$
    such that~\eqref{eq:guarantee} holds by selecting
\begin{equation*}
  \mathcal{P}_{t+1} %
  := \mathbb{B}_{\hat{\epsilon}%
  }(\hat{\probP}_{t+1} %
  ) =\setdef{\probP}{ d_W(\probP, \hat{\probP}_{t+1}) \leq \hat{\epsilon} },
\end{equation*}
where $
    \hat{\epsilon} %
=
\epsilon %
+ \Norm{\vect{\alpha}^{\star} -\vect{\alpha}}_{\infty} H(t,T,{\vect{d}}),
$
with
\begin{equation*}
  H(t,T,{\vect{d}}):= \frac{1}{T} \sum\limits_{i=1}^{p} \sum\limits_{k\in \mathcal{T}} \Norm{ f^{(i)}(k,\hat{\vect{x}}_{k},\hat{\vect{d}}_{k}) - f^{(i)}(t,\hat{\vect{x}}_{t},{\vect{d}}) },
\end{equation*}
and the radius $\epsilon$ is selected as in
Lemma~\ref{lemma:perfectenv}.
\end{theorem}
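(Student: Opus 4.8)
The plan is to reduce Theorem~\ref{thm:ambiguityP} to Lemma~\ref{lemma:perfectenv} by bounding the Wasserstein distance between the data-driven empirical distribution $\hat{\probP}_{t+1}$ and the ``ideal'' empirical distribution $\probQ_{t+1}$ built from the true (unknown) map $f$, and then combining the two errors with the triangle inequality for $d_W$.

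First I would simplify the atoms of $\hat{\probP}_{t+1}$. Weighting the definition of $\xi_k^{(i)}(\vect{\alpha},\vect{d})$ by $\alpha_i$ and summing over $i$, and using $\sum_{i=1}^p \alpha_i = \trans{\vect{\alpha}}\vectorones{p}$ (assumed nonzero, so the division in $\xi_k^{(i)}$ is well-defined), the denominator cancels and one obtains
\[
  \sum_{i=1}^p \alpha_i\, \xi_k^{(i)}(\vect{\alpha},\vect{d})
  = \hat{f}(t,\hat{\vect{x}}_t,\vect{d}) + \hat{\vect{x}}_{k+1} - \hat{f}(k,\hat{\vect{x}}_k,\hat{\vect{d}}_k),
\]
where $\hat{f} := \sum_{i=1}^p \alpha_i f^{(i)}$ is the estimated mean map. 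This is exactly the atom $\xi_k$ of Lemma~\ref{lemma:perfectenv} with $\hat{f}$ in place of $f$. Subtracting the true atom $\xi_k(\vect{d})$ and invoking Assumption~\ref{assump:predictor}(2) to write $\hat{f}-f = \sum_{i=1}^p(\alpha_i-\alpha_i^\star)f^{(i)}$, the terms in $\hat{\vect{x}}_{k+1}$ cancel, and the triangle inequality together with H\"{o}lder's inequality ($|\sum_i a_i b_i|\le \max_i|a_i|\sum_i|b_i|$) give, for each $k \in \mathcal{T}$,
\[
  \Norm{ \textstyle\sum_{i=1}^p \alpha_i\, \xi_k^{(i)}(\vect{\alpha},\vect{d}) - \xi_k(\vect{d}) }
  \le \Norm{\vect{\alpha}^\star-\vect{\alpha}}_\infty \sum_{i=1}^p \Norm{ f^{(i)}(k,\hat{\vect{x}}_k,\hat{\vect{d}}_k) - f^{(i)}(t,\hat{\vect{x}}_t,\vect{d}) }.
\]

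Next, since $\hat{\probP}_{t+1}$ and $\probQ_{t+1}$ are uniform empirical measures indexed by the same set $\mathcal{T}$, the synchronous transport plan moving mass $1/T$ from the $k$-th atom of one to the $k$-th atom of the other is admissible, so $d_W(\hat{\probP}_{t+1},\probQ_{t+1})$ is at most the average of the per-atom displacements above; summing yields the pathwise bound $d_W(\hat{\probP}_{t+1},\probQ_{t+1}) \le \Norm{\vect{\alpha}^\star-\vect{\alpha}}_\infty H(t,T,\vect{d})$. Applying Lemma~\ref{lemma:perfectenv} with the true $f$ --- legitimate because its conclusion is a probabilistic statement about $\probQ_{t+1}$ that does not require $f$ to be computable --- we have $d_W(\probP_{t+1},\probQ_{t+1}) \le \epsilon$ on an event of probability at least $1-\beta$ over the historical disturbances, with $\epsilon$ as in Lemma~\ref{lemma:perfectenv}. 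On that event the triangle inequality for $d_W$ gives $d_W(\probP_{t+1},\hat{\probP}_{t+1}) \le \epsilon + \Norm{\vect{\alpha}^\star-\vect{\alpha}}_\infty H(t,T,\vect{d}) = \hat{\epsilon}$, i.e. $\probP_{t+1} \in \mathbb{B}_{\hat{\epsilon}}(\hat{\probP}_{t+1})$, which is~\eqref{eq:guarantee}.

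The manipulations are elementary; the main conceptual step is recognizing that the $\vect{\alpha}$-weighted mixture of the $\xi_k^{(i)}$ collapses to the Lemma~\ref{lemma:perfectenv} construction driven by the estimated mean $\hat{f}$, after which everything is a perturbation argument. The only points needing care are: (i) justifying the use of Lemma~\ref{lemma:perfectenv} with the unknown $f$ purely as an analysis device (it never enters the algorithm), noting that the bound on $d_W(\hat{\probP}_{t+1},\probQ_{t+1})$ is pathwise so the learned, data-dependent $\vect{\alpha}=\vect{\alpha}_t$ raises no measurability issue, and the whole randomness enters only through Lemma~\ref{lemma:perfectenv}; and (ii) recording the standing hypothesis $\trans{\vect{\alpha}}\vectorones{p}\ne 0$ under which $\hat{\probP}_{t+1}$ is well-defined.
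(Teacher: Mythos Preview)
Your proof is correct and follows essentially the same route as the paper: triangle inequality for $d_W$, Lemma~\ref{lemma:perfectenv} for the $\probQ_{t+1}$--$\probP_{t+1}$ term, and a deterministic bound on $d_W(\hat{\probP}_{t+1},\probQ_{t+1})$ via the per-atom displacement and H\"older. The only cosmetic difference is that you bound the empirical-to-empirical Wasserstein distance using the synchronous coupling (primal side), whereas the paper uses the Kantorovich--Rubinstein dual with $1$-Lipschitz test functions; both yield the identical bound $\tfrac{1}{T}\sum_{k\in\mathcal{T}}\Norm{\xi_k-\sum_i\alpha_i\xi_k^{(i)}}$.
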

Theorem~\ref{thm:ambiguityP} indicates that, if we select
$\vect{\alpha}$ wisely, i.e., $\vect{\alpha} \equiv
\vect{\alpha}^{\star}$, then the adaptive dynamic ambiguity set is
identical to that of Lemma~\ref{lemma:perfectenv}. %

To estimate an unknown $\vect{\alpha}^{\star}$ while preserving the
probabilistic guarantees, we propose an online learning
algorithm %
that attempts to bring $\vect{\alpha}$ close to
$\vect{\alpha}^{\star}$ with high probability. %
Intuitively, our approach is based on the comparison of new obtained
data with updates given by a predictor combination.%
\begin{theorem}[Learning of $\vect{\alpha}^\star$] \label{thm:proj} Let the
  data set $\mathcal{I}$ and predictors $\{
  f^{(i)}\}_i$ %
  be given.
  For each $k \in \mathcal{T}$ and $i \in \until{p}$, let us denote
  $f^{(i)}_k:=f^{(i)}({k},\hat{\vect{x}}_{k},\hat{\vect{d}}_{k})$.
  Consider the data matrix $A \equiv A_t \in \real^{p \times p}$ with
\begin{equation*}
  \begin{aligned}
    A(i,j):= \frac{1}{T} \sum\limits_{k\in \mathcal{T}} \langle f^{(j)}_{k} ,  P_k  f^{(i)}_{k}  \rangle, \; \; i,\; j \in \until{p}, \\
    \end{aligned}
\end{equation*}
where $P_k$ is an online regularization matrix
at time $k$, and let us consider the data vector $\vect{b}\equiv
\vect{b}_t \in \real^{p}$,
with
components %
\begin{equation*}
  \begin{aligned}
    \vect{b}(i):= \frac{1}{T}\sum\limits_{k\in \mathcal{T}} \langle
    \hat{\vect{x}}_{k+1} , P_k f^{(i)}_{k} \rangle, \;\; i \in
    \until{p}.
    \end{aligned}
\end{equation*}
Given $\eta>0$, we select $P_k$ such that $\Norm{ P_k f^{(i)}_{k}}
\leq \eta$ for all $i \in \until{p}$, $k \in \mathcal{T}$, and select
$\vect{\alpha}\equiv\vect{\alpha}_{t}$ to be
\begin{equation}
  \vect{\alpha} = \pinverse{A} \vect{b},
\label{eq:alpha}
\end{equation}
where $\pinverse{A}$ denotes the Moore–Penrose inverse of $A$.  Let
Assumption~\ref{assump:subG} and Assumption~\ref{assump:predictor}
hold, and take
\begin{equation*}
  c:= \sigma e   \eta \sqrt{np} \subscr{\sigma}{min}^{-1} (A),
\end{equation*}
where $\sigma$ is that in Assumption~\ref{assump:subG}, the constant
$e\approx2.718$,
 and $\subscr{\sigma}{min} (A)$ is the
minimal non-zero principal singular value of $A$. Then by selecting
$\gamma \geq n c$, the parameter $\vect{\alpha}$ is ensured to be close
to $\vect{\alpha}^{\star}$ with high probability in the following
sense:
\begin{equation*}{\small
\Prob \left( \Norm{\vect{\alpha} - \vect{\alpha}^{\star}}_{\infty}
  \leq \gamma \right)
\geq
    1- \exp \left( - \frac{ (n c - \gamma)^2 T^2 }{2 \left[ (2T-1)c \gamma + n c^2 \right]} \right).
}
\end{equation*}
In particular, selecting $\gamma \geq nc/e$, we obtain a non-trivial
bound with a slow confidence growth rate as follows
\begin{equation*}
\Prob \left( \Norm{\vect{\alpha} -
        \vect{\alpha}^{\star}}_{\infty} \leq \gamma \right) \geq 1-
    \frac{1 }{\gamma} n \sigma \eta \sqrt{np}\subscr{\sigma}{min}^{-1} (A).
\end{equation*}
\end{theorem}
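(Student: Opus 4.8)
The plan is to reduce the estimation error $\vect{\alpha}-\vect{\alpha}^{\star}$ to a linear image of the disturbance sequence $\{\vect{w}_k\}_{k\in\mathcal{T}}$ and then control that image by a Chernoff-type tail bound. First I would use Assumption~\ref{assump:predictor} to write, for each $k\in\mathcal{T}$, $\hat{\vect{x}}_{k+1}=\sum_{j=1}^{p}\alpha_j^{\star}f^{(j)}_k+\vect{w}_k$ with $f^{(i)}_k=f^{(i)}(k,\hat{\vect{x}}_k,\hat{\vect{d}}_k)$. Substituting this into the definition of $\vect{b}$ and comparing with that of $A$ yields the exact identity $\vect{b}=A\vect{\alpha}^{\star}+\vect{v}$, where $\vect{v}\in\real^{p}$ has components $\vect{v}(i)=\frac{1}{T}\sum_{k\in\mathcal{T}}\langle\vect{w}_k,P_kf^{(i)}_k\rangle$, each of which is zero-mean by Assumption~\ref{assump:subG}(2). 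Since the predictors are linearly independent almost everywhere (Assumption~\ref{assump:predictor}(1)), $\vect{\alpha}^{\star}$ lies in the row space of $A$, so $\pinverse{A}A\vect{\alpha}^{\star}=\vect{\alpha}^{\star}$, and \eqref{eq:alpha} gives the clean error identity $\vect{\alpha}-\vect{\alpha}^{\star}=\pinverse{A}\vect{b}-\vect{\alpha}^{\star}=\pinverse{A}\vect{v}$.

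Next I would pass to deterministic norm estimates. Because $\Norm{\cdot}_{\infty}\le\Norm{\cdot}$ on $\real^{p}$ and $\Norm{\pinverse{A}}=\subscr{\sigma}{min}^{-1}(A)$, we have $\Norm{\vect{\alpha}-\vect{\alpha}^{\star}}_{\infty}\le\subscr{\sigma}{min}^{-1}(A)\Norm{\vect{v}}$. Expanding $\langle\vect{w}_k,P_kf^{(i)}_k\rangle=\sum_{l=1}^{n}(P_kf^{(i)}_k)_l\,w_{k,l}$ and invoking $\Norm{P_kf^{(i)}_k}\le\eta$, each scalar summand is $\eta\sigma$-subGaussian by Assumption~\ref{assump:subG}(2); carrying through the reduction $\Norm{\vect{v}}\le\sqrt{p}\max_i|\vect{v}(i)|$ and the count of $n$ components over $|\mathcal{T}|=T$ instants, together with the numerical slack $e$ produced by the moment estimate used below, is exactly what assembles the constant $c=\sigma e\eta\sqrt{np}\,\subscr{\sigma}{min}^{-1}(A)$. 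After this reduction the claim becomes a tail bound for a scalar quantity $Q$ — either $\Norm{\vect{\alpha}-\vect{\alpha}^{\star}}_{\infty}$ itself or the tight proxy just constructed — built from the disturbances over $\mathcal{T}$.

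Then I would bound the tail of $Q$ by a Bernstein-type Chernoff argument. Because $Q$ is assembled from absolute values (equivalently squares) of disturbances it has a strictly positive mean of order $nc$; applying $\Prob(Q\ge\gamma)\le\inf_{\lambda>0}e^{-\lambda\gamma}\mathbb{E}[e^{\lambda Q}]$, factorizing the moment generating function over $\mathcal{T}$ by time-wise independence (Assumption~\ref{assump:subG}(1)), and bounding each factor through the subGaussian MGF of $w_{k,l}$ gives a pre-optimization exponent of the schematic form $\lambda(nc-\gamma)+\frac{\lambda^{2}}{2T^{2}}\big[(2T-1)c\gamma+nc^{2}\big]$. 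The minimizer $\lambda^{\star}=(\gamma-nc)T^{2}/[(2T-1)c\gamma+nc^{2}]$ is admissible precisely when $\gamma\ge nc$ — which is why the first assertion carries that hypothesis — and substituting it back produces the exponent $-(nc-\gamma)^{2}T^{2}/\big(2[(2T-1)c\gamma+nc^{2}]\big)$, i.e.\ the stated bound. For the second, weaker estimate I would instead apply the ordinary Markov inequality $\Prob(\Norm{\vect{\alpha}-\vect{\alpha}^{\star}}_{\infty}>\gamma)\le\gamma^{-1}\mathbb{E}\Norm{\vect{\alpha}-\vect{\alpha}^{\star}}_{\infty}$ and bound the first absolute moment along the same reduction using $\mathbb{E}|w_{k,l}|\le\sigma$, which yields $\mathbb{E}\Norm{\vect{\alpha}-\vect{\alpha}^{\star}}_{\infty}\le n\sigma\eta\sqrt{np}\,\subscr{\sigma}{min}^{-1}(A)=nc/e$ and hence $1-\gamma^{-1}n\sigma\eta\sqrt{np}\,\subscr{\sigma}{min}^{-1}(A)$, informative once $\gamma\ge nc/e$.

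The skeleton — the error identity of the first paragraph and the norm reduction of the second — is routine linear algebra; the hard part will be the constant bookkeeping in the concentration step, in particular getting the exact $n$-, $p$- and $T$-dependence inside $c$ and inside the denominator $(2T-1)c\gamma+nc^{2}$. Matching these forces a careful choice of how to bound $\langle\vect{w}_k,P_kf^{(i)}_k\rangle$ — componentwise (a sum of $nT$ terms, which is what couples the $n$- and $T$-dependence and produces the $2T-1$) versus directly as a single subGaussian — and a sharp moment/MGF inequality to isolate the factor $e$; the positive mean of $Q$ (of order $nc$) must be tracked so that it appears both in the numerator $(nc-\gamma)^{2}$ and as the $nc^{2}$ contribution to the variance proxy. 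A secondary point needing justification is the identity $\pinverse{A}A\vect{\alpha}^{\star}=\vect{\alpha}^{\star}$: one must deduce from the almost-everywhere linear independence of the $f^{(i)}$ (and the latitude in choosing the regularizers $P_k$) that $\vect{\alpha}^{\star}$ lies in the row space of the regularized Gram-type matrix $A$, i.e.\ that $A$ is generically of full rank.
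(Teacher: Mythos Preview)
Your skeleton --- the identity $\vect{\alpha}-\vect{\alpha}^{\star}=\pinverse{A}\vect{v}$ with $\vect{v}(i)=\frac{1}{T}\sum_{k}\langle\vect{w}_k,P_kf^{(i)}_k\rangle$, followed by the norm reduction $\|\vect{\alpha}-\vect{\alpha}^{\star}\|_{\infty}\le\sqrt{p}\,\subscr{\sigma}{min}^{-1}(A)\,\|\vect{v}\|_{\infty}$ --- matches the paper exactly. The Markov bound for the second assertion is also the same idea, though the paper phrases it via $\mathbb{E}[\|\vect{w}_k\|_{\infty}]\le n\sigma$ rather than componentwise.

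The gap is in the exponential bound. You posit a quadratic Chernoff exponent $\lambda(nc-\gamma)+\tfrac{\lambda^{2}}{2T^{2}}[(2T-1)c\gamma+nc^{2}]$ and then optimize it, but this is reverse-engineered from the answer; you do not have a mechanism that actually produces that variance proxy, and ``bounding each factor through the subGaussian MGF of $w_{k,l}$'' cannot give it because the relevant quantity involves $\|\vect{w}_k\|$ (or $|\langle\vect{w}_k,\cdot\rangle|$), which is nonnegative and not mean-zero subGaussian. The paper's actual route is different and more concrete: first bound $\|\vect{v}\|_{\infty}\le\frac{\eta\sqrt{n}}{T}\sum_{k\in\mathcal{T}}\|\vect{w}_k\|_{\infty}$ via Cauchy--Schwarz and norm equivalence, so that the whole problem reduces to tail-bounding the scalar $S=\frac{1}{T}\sum_{k}\|\vect{w}_k\|_{\infty}$. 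Then, for each $k$, the MGF of $\|\vect{w}_k\|_{\infty}$ is controlled through its moments: one shows $\mathbb{E}[\|\vect{w}_k\|_{\infty}^{l}]\le n\sigma^{l}l^{l/2+1}$ (via the subGaussian tail and the integral formula for moments), inserts this into the power series for the exponential, and uses $l!\ge(l/e)^{l}$ --- this is where the $e$ in $c$ comes from --- to sum a geometric series, obtaining $\mathbb{E}[\exp(\tfrac{\lambda}{T}\|\vect{w}_k\|_{\infty})]\le\exp\!\big(\tfrac{\lambda\sigma ne}{T-\lambda\sigma e}\big)$ for $\lambda<T/(\sigma e)$. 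Multiplying over $k$ gives a Chernoff exponent that is \emph{rational}, not quadratic, in $\lambda$; the paper then plugs in the specific (sub-optimal) choice $\lambda=\tfrac{T}{2\sigma e}-\tfrac{nT}{2\gamma}$, and it is this algebraic substitution --- not optimization of a quadratic --- that produces the exact denominator $(2T-1)c\gamma+nc^{2}$. Your plan would not recover these constants without this moment-series device and this particular $\lambda$.
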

Theorem~\ref{thm:proj} provides an online computation of a real-time
$\vect{\alpha}$ that is close to %
$\vect{\alpha}^{\star}$ %
within a time varying distance $\gamma$ with arbitrary high
probability, where this distance $\gamma$ depends only on the
environment predictors as well as on the data sets. Note that, the
confidence of selecting $\gamma>nc$ as a bound of $\Norm{\vect{\alpha}
  - \vect{\alpha}^{\star}}_{\infty}$ increases exponentially as we
increase the length $T$ of the data sets.  This motivates us to
propose a computable dynamic ambiguity set, described as in
Theorem~\ref{thm:ambiguityP}, by selecting its dynamic radius as
\begin{equation}
  \begin{aligned}
\hat{\epsilon} %
=&
\epsilon %
+ \gamma H(t,T,{\vect{d}}),
  \end{aligned}
  \label{eq:tractable_epsilon}
\end{equation}
where $\epsilon$, $\gamma>nc$ and $H$ are chosen as in
Lemma~\ref{lemma:perfectenv}, Theorem~\ref{thm:proj}, and
Theorem~\ref{thm:ambiguityP}, respectively. Such selection results in
modified guarantees of~\eqref{eq:guarantee} as follows
\begin{equation}
  \begin{aligned}
    &
{\Prob}\left( \probP_{t+1} \in
      \mathcal{P}_{t+1} %
    \right)
\\
    & \hspace{2ex}
\geq \left( 1- \beta \right) \left( 1- \exp
      \left( - \frac{ (n c - \gamma)^2 T^2 }{2 \left[ (2T-1)c \gamma +
            n c^2 \right]} \right) \right),
 \end{aligned}
\label{eq:modgua}
\end{equation}
where as time $t$ increases with a selection of $T_0= \infty$ (or $T =
t$), the confidence value on the right hand side increases to
$1-\beta$ exponentially fast. Fig.~\ref{fig:adaptive_amb_set}
compares the adaptation of the ambiguity set with and without knowing
$f$.
\begin{figure}[tbp]%
\centering
\includegraphics[width=0.2\textwidth]{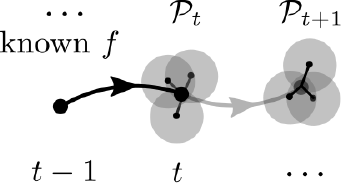}%
\hspace{6ex}
\includegraphics[width=0.22\textwidth]{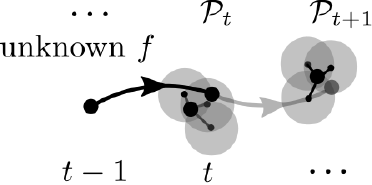}%
\caption{\small
Online characterization of $\mathcal{P}_{t+1}$, with (without) $f$. The
dark line is the trajectory of $\vect{x}$ and the gray part is yet to
be revealed. At $t$, we obtain $\mathcal{P}_{t+1}$ with its elements supported on $T_0=3$ shaded regions with high probability. Each region $k$ has center $\xi_k$ $(\sum_i \alpha_i \xi^{(i)}_k)$ and radius proportional to $\epsilon$ $(\hat{\epsilon})$. Note the centers of these regions are related to a known (learned) point $f(t,\hat{\vect{x}}_{t},\vect{d})$ $(\sum_i \alpha_i f^{(i)}(t,\hat{\vect{x}}_{t},\vect{d}))$,
they are close if the learning is effective.
}%
\label{fig:adaptive_amb_set}%
\end{figure}
\begin{remark}[Data-driven selection of the
  radius] \label{remark:select_eps} {\rm The radius of the adaptive
    ambiguity set~\eqref{eq:tractable_epsilon} depends on the unknown,
    noise-related parameter $\sigma$, the regularization constant
    $\eta$, and on the online parameters $\subscr{\sigma}{min}
    (A)$. In many engineering problems, an upper bound $\sigma$ of the
    noise-related parameter can be determined \textit{a-priori} or
    empirically. The parameter $\eta$, together with the
    regularization matrices $P$, are introduced to ensure
    that~\eqref{eq:alpha} is well posed. In particular, $P$ can be a
    diagonal matrix with each diagonal term scaling its corresponding
    components. At each $t$, the computation~\eqref{eq:alpha} needs an
    additional online regularization matrix, denoted by
    $P_{t-1}$.
    For example, $P_{t-1}$ can be a diagonal matrix with the
    $\supscr{j}{th}$ diagonal term equal to ${\small
      1/(\sqrt{p}\max_{i \in \until{p}} |f^{(i)}_{t-1}(j)|)}$, where
    $f^{(i)}_{t-1}(j)$ is the $\supscr{j}{th}$ component of
    $f^{(i)}_{t-1}$, which results in $\eta=1$. Finally,
    $\subscr{\sigma}{min} (A)$ relies on the selection of the model
    set $\{f^{(i)} \}_i$ as well as the other two parameters $\eta$
    and $\sigma$. In practice, all the zero singular
      values of $A$ is perturbed by the noise with a factor of $\sigma$.
    One could select the minimal non-zero principal
    singular value to be $\subscr{\sigma}{min} (A)=\min \setdef{
      \sigma_i(A) }{\sigma_i(A) > \sigma, i \in \until{p} } $.
       }
\end{remark}

\noindent \textbf{Online Procedure:} To summarize, our online learning
methodology is given in Algorithm table~\ref{alg:Optal}.
Our approach
leverages the adaptation of a dynamic ambiguity set, together with
\textit{a-priori} knowledge of $\vect{d}$, and learns model parameter
$\vect{\alpha}$, and characterizes the unknown $f$ online via $\mathcal{P}$.
{\begin{algorithm}[hpb]
 \floatname{algorithm}{ $\probP$-Learning}{}
    \caption{${\textrm{Learn}}(\mathcal{I}, \vect{d} )$}
\label{alg:Optal}
\begin{algorithmic}[1]
  \Require $\{ f^{(i)}\}_i$, $\beta$, $T_0$, $\sigma$, $\theta$ and $t = 1$;
  \Ensure Online $\vect{\alpha}$, $\hat{\probP}$, $\hat{\epsilon}$
\Repeat
\State Update data set $\mathcal{I}:=\mathcal{I}_t$ and knowledge $\vect{d}:=\vect{d}_t$;
\State Compute $\vect{\alpha}:=\vect{\alpha}_t$ as in~\eqref{eq:alpha};
\State Select $\hat{\probP}_{t+1}$ as in~\eqref{eq:empdistn} and $\hat{\epsilon}:=\hat{\epsilon}_t$ as in~\eqref{eq:tractable_epsilon};
\State Leverage $(\hat{\probP}_{t+1},\hat{\epsilon} )$ as characterization of $f$;
\State $t \leftarrow t+1$;
\Until time $t$ stops.
\end{algorithmic}
\end{algorithm}}

\section{Simulations}\label{sec:simulation}
In this section, we illustrate the previous results on a simple
vehicle example. Consider a vehicle driving under various road
conditions, where its control signal is derived in advance, according
to a path-planner in an ideal environment.

Our goal is to learn the real-time environment and estimate the system
states via our adaptive $\probP$-Learning algorithm.
Our vehicle is modeled as a differential-drive robot subject to
uncertainty, see~\cite{SML:06}:
\begin{equation*}
  \begin{aligned}
  {x}_{1}^{+}=& {x}_{1} + h \cos({x}_{3}) {u}_{1} + h{w}_{1} , \\
  {x}_{2}^{+}=& {x}_{2} + h \sin({x}_{3}) {u}_{1} + h {w}_{2}, \\
  {x}_{3}^{+}=& {x}_{3} - h {u}_{2} + h {w}_{3},
\end{aligned} %
\end{equation*}
\begin{equation}
\begin{aligned}
  {u}_{1}=& \frac{r}{2} ( v_{l} + v_{r} + e_{1}   ), \\
  {u}_{2}=& \frac{r}{2R} ( v_{l} - v_{r} + e_{2}   ),
\end{aligned} \label{eq:unicycle}
\end{equation}
where $\vect{x}:=(x_1,x_2,x_3) \in \real^2 \times [-\pi,\pi) \cong
\real \times \unitcircle $ stands for vehicle position and orientation
on the 2-D plane. We denote by $\vect{x}^{+}$ the state at the next
time step and $\vect{w}:=(w_1,w_2,w_3)$ a zero-mean, mixture of
Gaussian and Uniform distributions, which are subGaussian
uncertainties
with $\sigma=0.5$.  We assume %
$\vect{x}_0=(0,0,0)$ and $h=10^{-3}$.  The velocity
$\vect{u}:=(u_1,u_2)$ is determined by a wheel radius $r=0.15$ m, the
distance between wheels $R=0.4$ m, the given wheel speed
$\vect{d}:=(v_{l}$, $v_{r})$ and an unknown parameter $
\vect{e}:=(e_{1} , e_{2})$, which depends on the wheel and road
conditions.  For simulation purposes, we assume that the vehicle may
move over three road zones, a slippery zone with
$\vect{e}^{(1)}=(4,0)$, a sandy zone with $\vect{e}^{(2)}=(-6,0)$, and
a smooth, regular zone with $\vect{e}^{(3)}=(0,0)$, as described in
Fig.~\ref{fig:vec_plan}.  The vehicle executes the following left and
right wheel speed plan (rad/s):
\begin{equation*} \small
  \begin{aligned}
    & v_{l} =10 -  0.5 \sin( 20 h \pi t ) , \\
    & v_{r} =10 +  0.5 \sin( 20 h \pi t ) .
  \end{aligned}
\end{equation*}
Now we employ our adaptive learning algorithm for the
characterization of the uncertain vehicle states and learning of the
unknown road-condition parameter $\vect{e}$ in real time. To do this,
we take $p=3$ predictors as in~\eqref{eq:unicycle} with
$\vect{w}\equiv 0$, and
\begin{equation*} \small
  \begin{aligned}
  &  i=1, \quad e_1=0, \; & e_2=0, \\
  &  i=2, \quad e_1=10, \; &  e_2=0, \\
  &  i=3, \quad e_1=0, \; & \quad e_2=10.
  \end{aligned}
\end{equation*}
Note that Assumption~\ref{assump:predictor} holds with
$\vect{\alpha}^{\star}:=(0.6,0.4,0)$ in the slippery zone,
$\vect{\alpha}^{\star}:=(1.6,-0.6,0)$ in the sandy zone and
$\vect{\alpha}^{\star}:=(1,0,0)$ in the smooth zone. We select
$T_0=300$, and, at each time $t$, we have access to model sets
$\{f^{(i)} \}_i$ as well as the real-time data set $\mathcal{I}_t$ and
$\vect{d}$. Note that the notions of inner product and norm are those defined %
on the vector space $T(\real^2 \times \mathbb{S}) \equiv \real^3$. Recall that $h= 10^{-3}$, so a $T_{0} = 300$ corresponds to a time window of order 0.3sec.
We select online diagonal regularization
matrices $P$ with diagonal ${\small (1/(\sqrt{3}\max_{i =1,2,3}
  |f^{(i)}(j)|)}$ for $j=1,2$ and $1$ for $j=3$, resulting in ${\small
  \eta= \max_{i,k\in \mathcal{T}} \Norm{ P_k f^{(i)}_{k}}} $.
\begin{figure}[tbp]%
\centering
\includegraphics[width=0.3\textwidth]{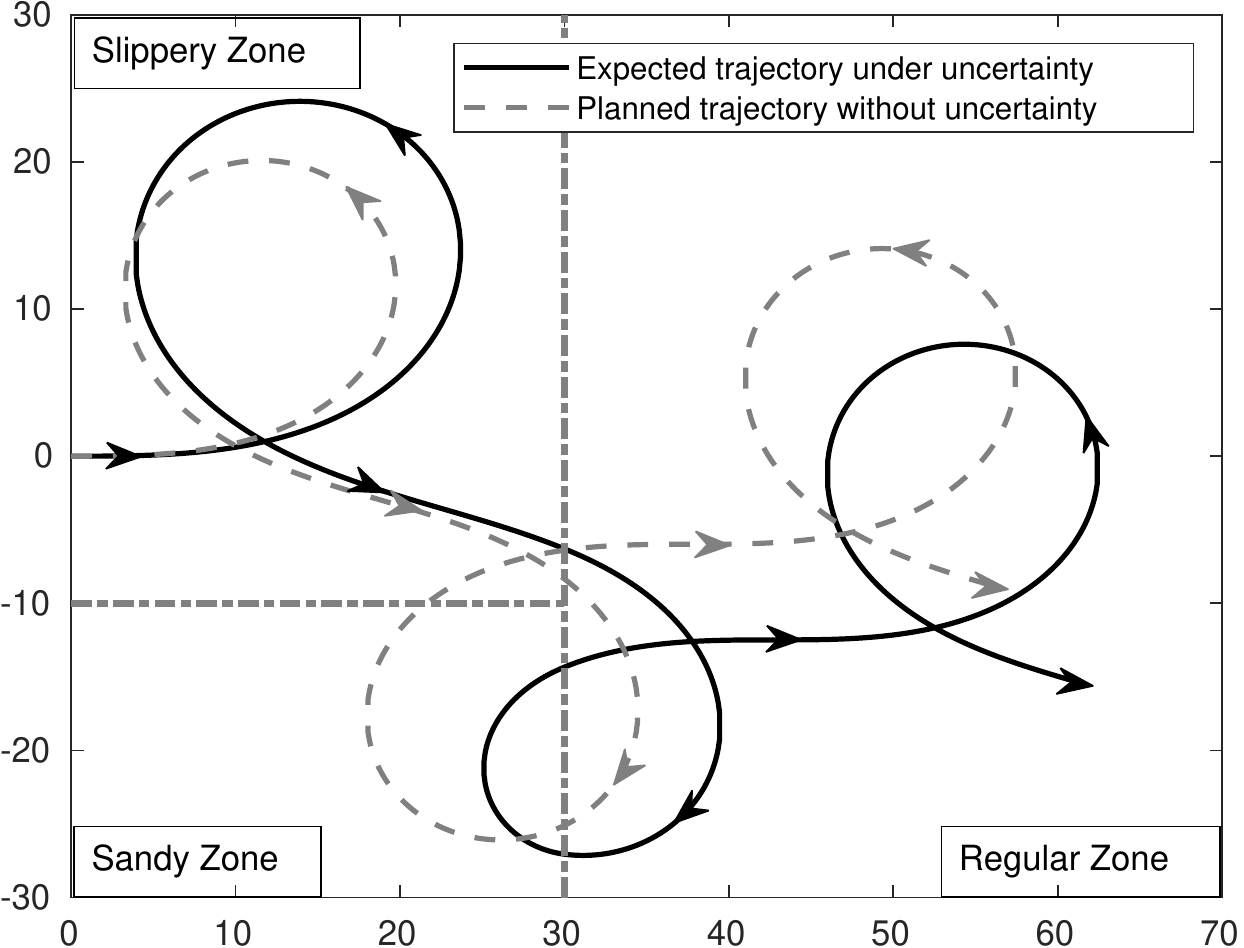}%
\caption{\small Path plan and actual trajectory in various $\real^2$ road zones.}%
\label{fig:vec_plan}%
\end{figure}
\begin{figure}[tbp]%
\centering
\includegraphics[width=0.225\textwidth]{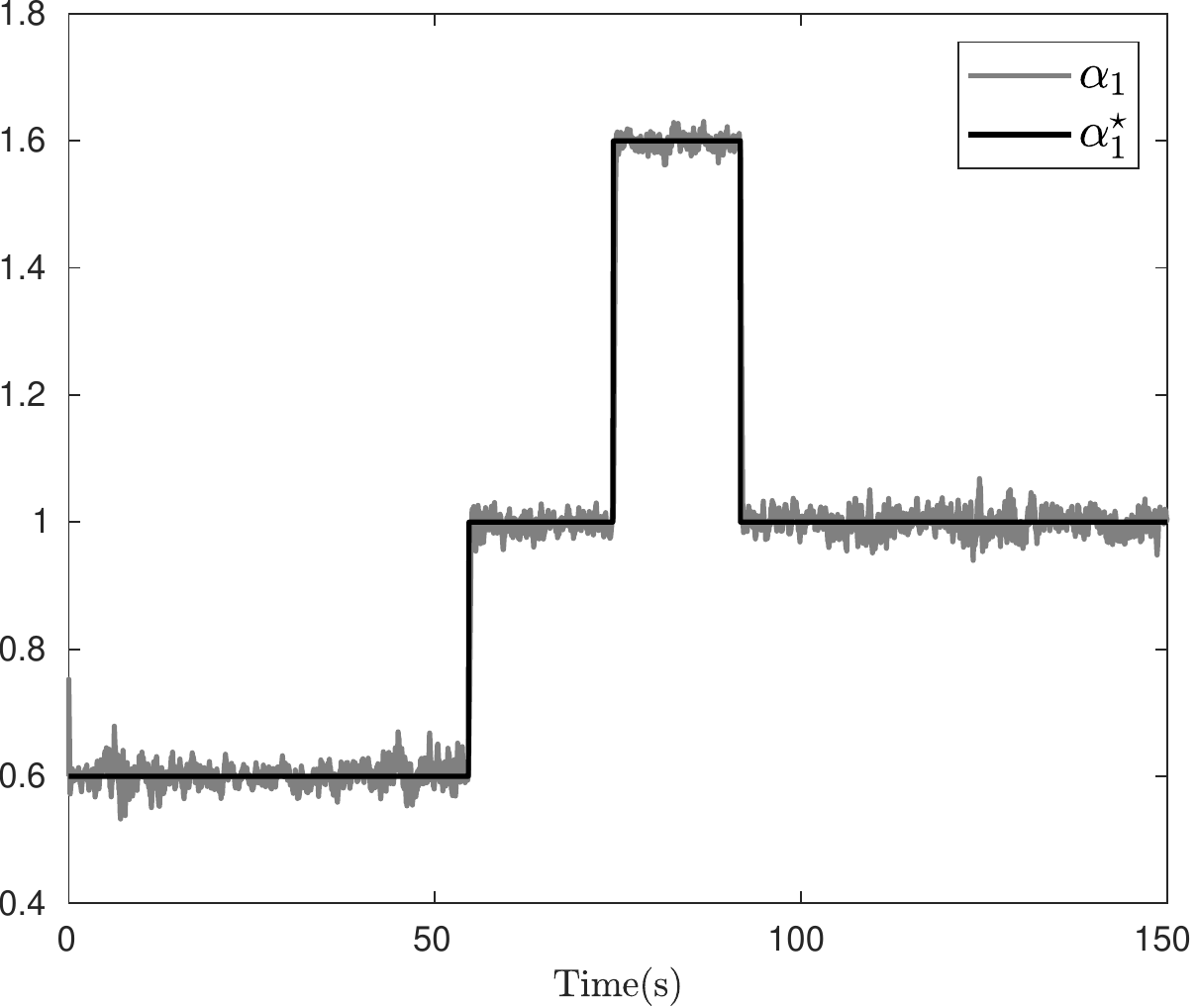}%
\includegraphics[width=0.24\textwidth]{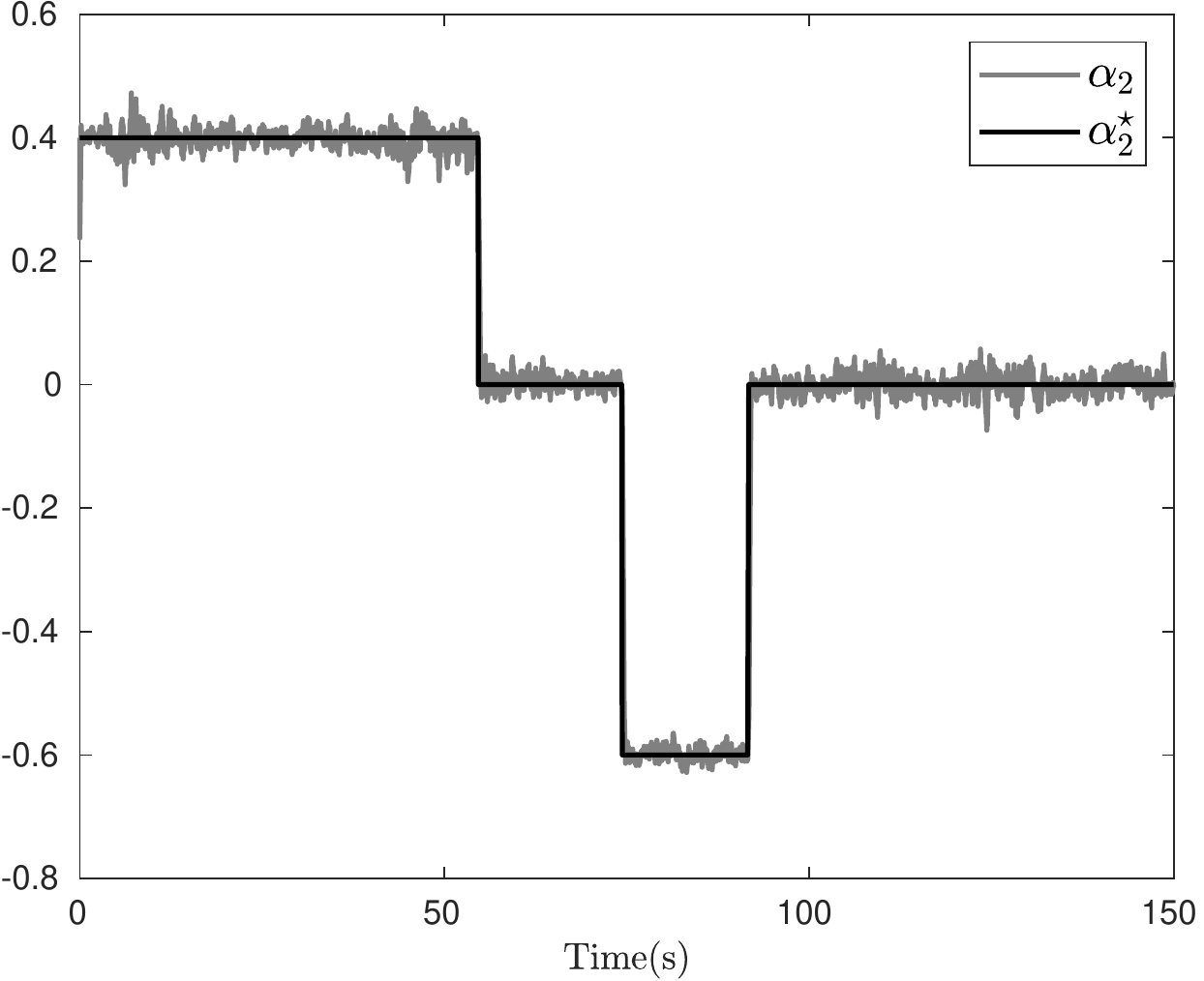}%
\caption{\small Real-time learning parameter $\alpha_1$ and $\alpha_2$.}%
\label{fig:vec_alpha}%
\end{figure}
\begin{figure}[tp]%
\centering
\includegraphics[width=0.235\textwidth]{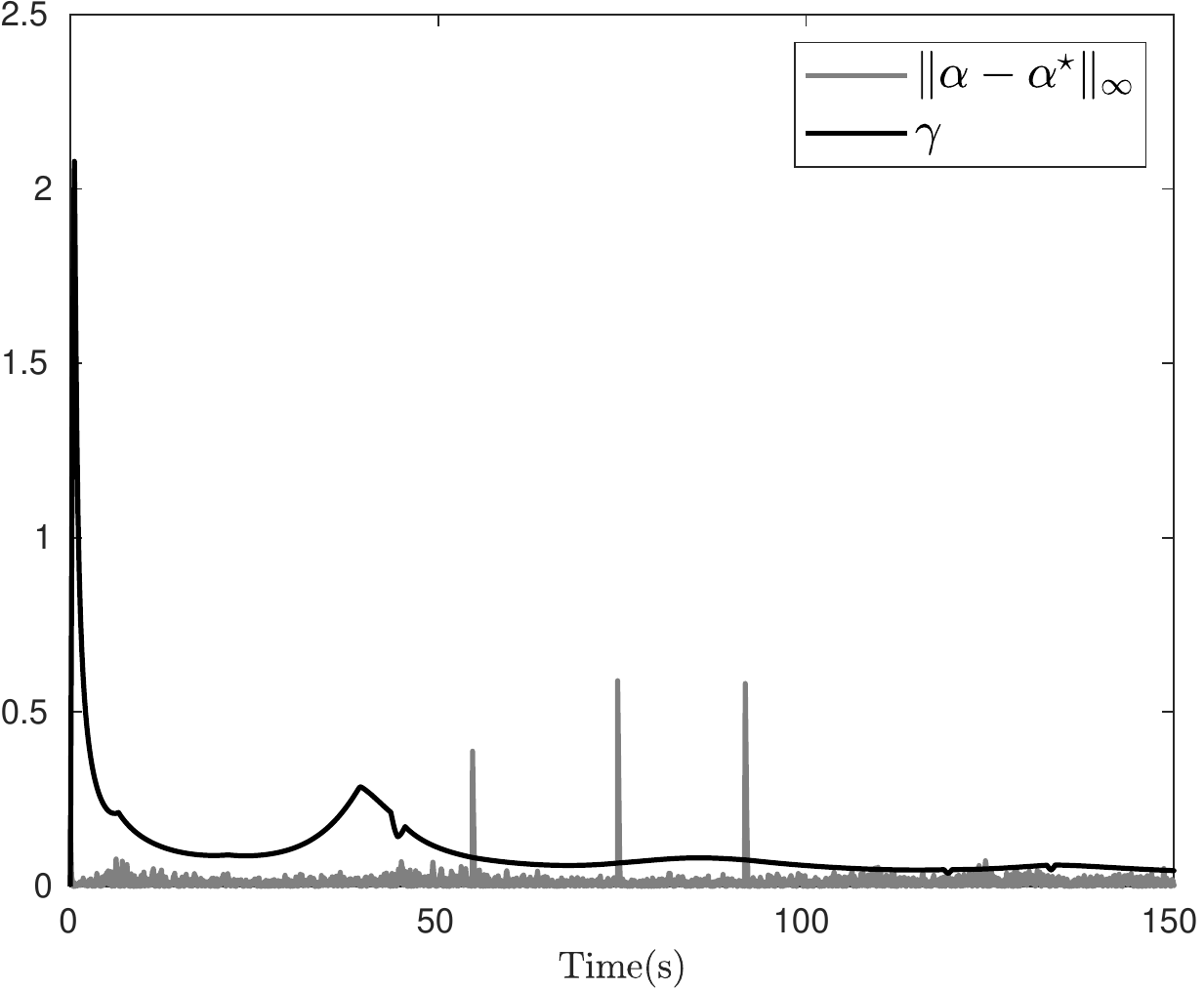}%
\includegraphics[width=0.237\textwidth]{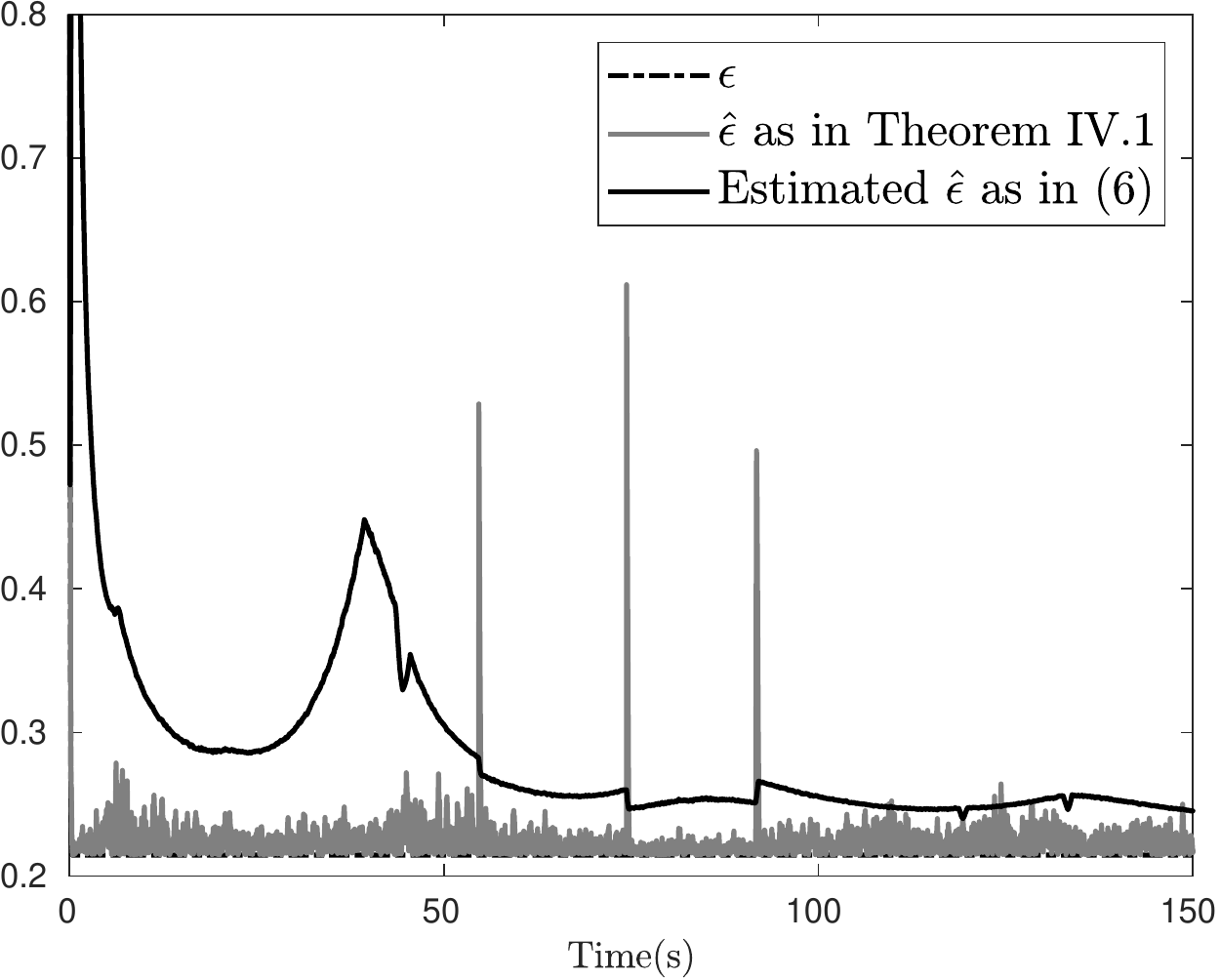}%
\caption{\small Quality of $\vect{\alpha}$ and the estimated radius  $\hat{\epsilon}$.}%
\label{fig:vec_epsilon}%
\end{figure}
\begin{figure}[tp]%
\centering
\includegraphics[width=0.235\textwidth]{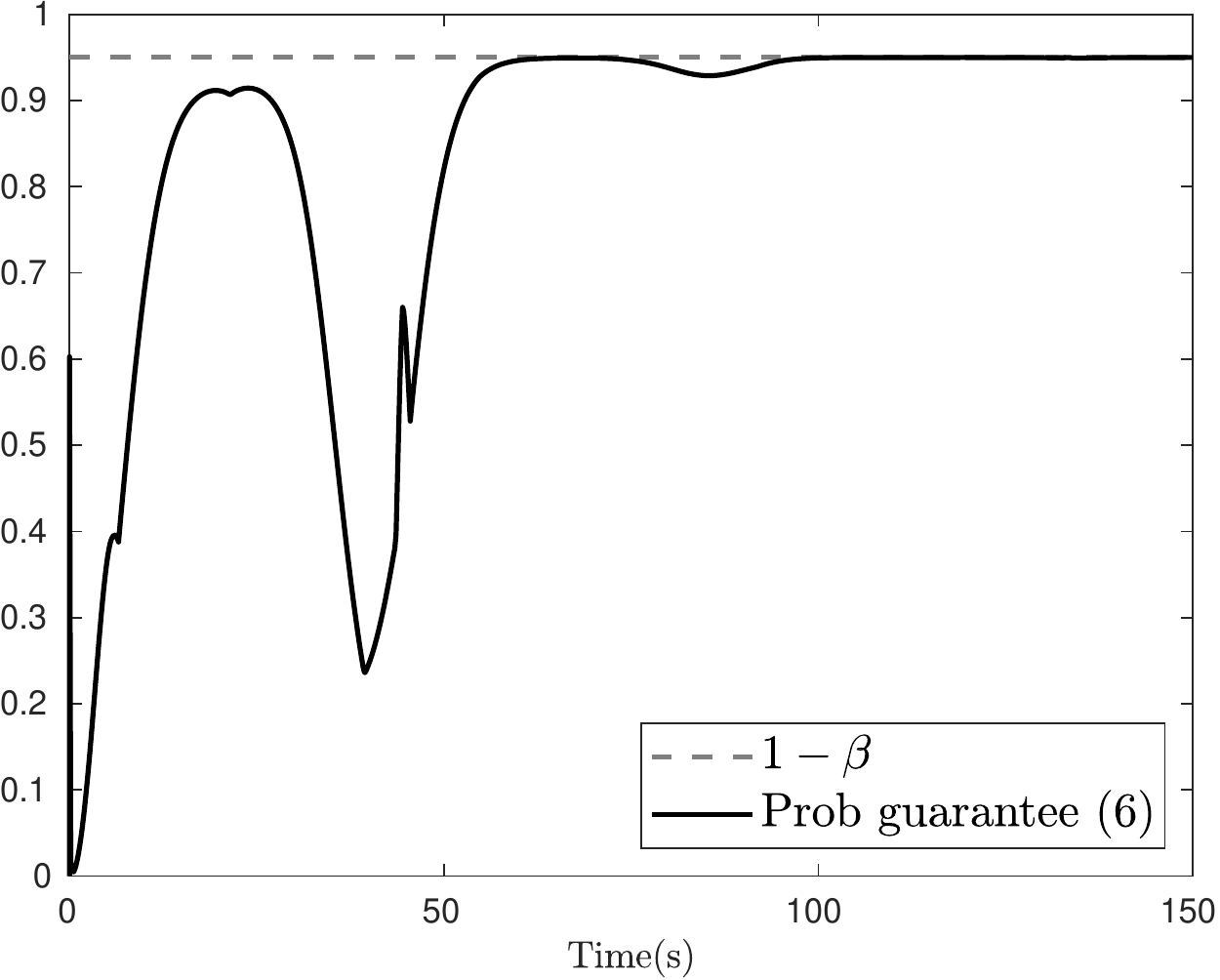}%
\includegraphics[width=0.235\textwidth]{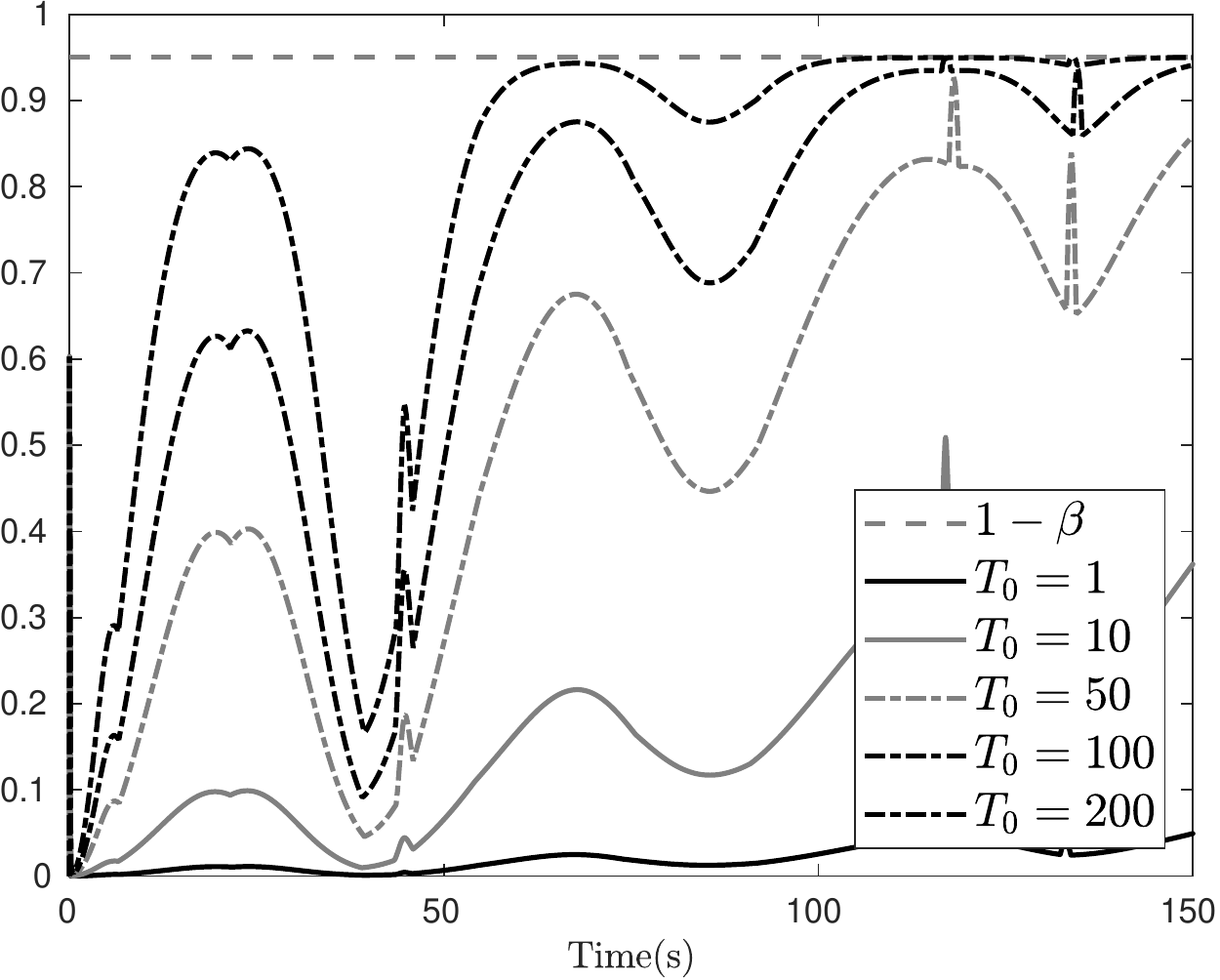}%
\caption{\small Online guarantee~\eqref{eq:modgua} and samples of~\eqref{eq:modgua} with various $T_0$.}%
\label{fig:modgua}%
\end{figure}

Fig.~\ref{fig:vec_alpha} demonstrates the real-time parameter learning
of $\alpha_1$ and $\alpha_2$. It can be seen that these unknown
parameters are effectively learned and tracked over
time.
Fig.~\ref{fig:vec_epsilon} shows the quality of the learned parameter
$\vect{\alpha}$ and its effect on the determination of the radius of
the adaptive ambiguity set.  We note that, for a particular noise
realization sequence, the estimated value $\gamma=n\sigma e \eta
\sqrt{np} \sigma_{\min}^{-1} (A) +\theta$, with $\theta=0.01$, upper
bounds $\Norm{\vect{\alpha}-\vect{\alpha}^{\star}}_{\infty}$ in high
probability. The large spikes in the figure are due to the change of
zone, resulting in a large error. This is expected, as the true
$\vect{\alpha}^\star$ changed discontinuously.
Meanwhile, the estimated radius $\hat{\epsilon}$ of the adaptive
ambiguity set, calculated as in~\eqref{eq:tractable_epsilon}, is a
conservative estimate of the unknown \textit{a-priori}
$\hat{\epsilon}$ as in Theorem~\ref{thm:ambiguityP}. The true
$\hat{\epsilon}$ captures exactly the ambiguity set over the time
sequence $\mathcal{T}$, for a $\beta= 0.05$. Over time, we empirically
see the difference between the approximated $\hat{\epsilon}$ via
$\gamma$ and the true one become close. In practice, the radius
$\hat{\epsilon}$ can be selected in a data-driven fashion, e.g., as in
Remark~\ref{remark:select_eps}, to serve as a way for less
conservative estimation of the radius in probability. We show in
Fig.~\ref{fig:modgua} the online guarantee~\eqref{eq:modgua} of this
particular case study, and various samples of~\eqref{eq:modgua},
obtained by taking different time horizon $T_0$.
\section{Conclusions}\label{sec:Conclude}
In this paper, we proposed an approach for online learning of unknown
and uncertain dynamical environments in a parameterized class.  The
proposed method allows us to learn the environment, while providing an
online characterization of the approximation via online-quantifiable
probabilistic guarantees. The approach opens a way for the robust
integration of the online learning with control design.  A robotic
example was used to demonstrate the efficacy of the method.

 \bibliographystyle{IEEEtran}
\bibliography{../../bib/alias,../../bib/SMD-add,../../bib/JC,../../bib/SM}
\end{document}